\newtheorem{definition}{Definition}
\newtheorem{theorem}{Theorem}[section]
\newtheorem{lemma}{Lemma}[section]
\def\dE{{\mathbb{E}}}
\def\dR{{\mathbb{R}}}
\newcommand{\mat}{\boldsymbol}
\begin{document}

\title{Average Capacity Analysis of Continuous-Time Frequency-Selective Rayleigh Fading 
Channels with Correlated Scattering Using Majorization}

\author{
\authorblockN{Eduard Jorswieck and Martin Mittelbach}
\authorblockA{Chair of Communications Theory, Communications Laboratory \\ Dresden University of Technology,  01062 Dresden, Germany\\ Email: \{jorswieck,mittelbach\}@ifn.et.tu-dresden.de}
}

\maketitle

\begin{abstract}
Correlated scattering occurs naturally in frequency-selective fading channels and its impact on the performance needs to be understood. In particular, we answer the question whether the uncorrelated scattering model leads to an optimistic or pessimistic estimation of the actual average capacity. In the paper, we use majorization for functions to show that the average rate with perfectly informed receiver is largest for uncorrelated scattering if the transmitter is uninformed. If the transmitter knows the channel statistics, it can exploit this knowledge. We show that for small SNR, the behavior is opposite, uncorrelated scattering leads to a lower bound on the average capacity. Finally, we provide an example of the theoretical results for an attenuated Ornstein-Uhlenbeck process including illustrations.   
\end{abstract}

\section{Introduction}

The ergodic capacity of a single-user multipath fading channel with slow fading is well known for the case when the receiver has perfect channel state information (CSI) \cite{Ozarow1994a,Biglieri1998,Xiang2003}. The resource allocation for such channels - discrete and continuous time - is studied extensively. Bit and power loading as well as rate adaptation for single- \cite{Scaglione1999,Wong1999} and multi-user systems \cite{Rohling2005,Wunder2007} is performed under different quality-of-service (QoS) requirements and under different assumptions on the channel state information (CSI) at the transmitter. The single-user case with perfect CSI at the transmitter and receiver leads to spectral water filling power allocation \cite{Cover1991}. Adaptation to long-term CSI is proposed in \cite{Yao05} under average and outage QoS constraints. The multi-antenna multi-carrier channel is analyzed from an information-theoretic 
perspective in \cite{Bolcskei2002}.

Obviously, the average achievable rate depends not only on the CSI but also on the channel statistics. Often, an uncorrelated scattering channel is assumed. However, this assumption does mostly not apply to ultra-wideband (UWB) channels. Furthermore, correlation occurs if transceiver filters are taken into account, even for an uncorrelated scattering channel \cite{Kafedziski2008}.

Recently, the achievable average rate for a single-user channel with correlated scattering was studied for the single antenna case in \cite{Muller2008,Kafedziski2008,Mittelbach2007} and the multiple antenna case in \cite{Intarapanich2004}. The results indicate that the tap correlation decreases the performance if no CSI is available at the transmitter. If CSI is available the behavior depends on the signal-to-noise ratio (SNR) whether the performance is increased (low SNR) or decreased (high SNR) by tap correlation \cite{Kafedziski2008,Mittelbach2008}. 

The main contribution of this paper is the non-trivial extension of the results from \cite{Mittelbach2008} to the continuous-time case. A different notion of majorization for function is used. The theory is illustrated by a concrete example using the attenuated Ornstein-Uhlenbeck process, i.e., exponentially decaying power and correlation. 

Note that majorization for functions is applied in communication theory before in \cite{Chuah2002}. There the order is used to compare spectra of eigenvalues of Wishart matrices in the context of MIMO systems. In the current paper, we use the order to compare correlation scenarios of frequency selective channels. 

\section{Channel model and capacity formulas}
\label{sec:cm}

We consider a single-user single-antenna frequency selective slowly fading channel, continuous in time and frequency domain. We assume an average power constraint $P$ on the channel input and perfect CSI at the receiver. The noise at the receiver is additive white Gaussian with power spectral density $N_0$. The channel model employed below is an extended version of \cite{Bolcskei2002} and is described in \cite{Mittelbach2007} in detail. 

Let $(X_{\tau})$, $(Y_{\tau})$, $\tau\in\dR$, be real i.i.d. second order Gaussian processes with zero mean and continuous covariance function $R$ with $\int_{-\infty}^{\infty} \int_{-\infty}^{\infty} R(\tau,\tau') d\tau d\tau' < \infty$. By the complex random process 
\begin{eqnarray}
	H = (H_{\tau})=(X_{\tau}+jY_{\tau}),\;\tau\in\dR, \label{eq:channelmodel}
\end{eqnarray}
we model the continuous-time channel impulse response (CIR) of a blockfading multipath channel in lowpass-equivalent form with Rayleigh-distributed magnitudes. The mean energy contained in $(H_{\tau})$ ist given by the constant $c=2\int_{-\infty}^{\infty}R(\tau,\tau)\,\mathrm{d}\tau$, which we use later for normalization.
Note that uncorrelated scattering can be modeled by choosing $R(\tau,\tau') = g(\tau) g(\tau') \delta(\tau-\tau'), \tau,\tau' \in \dR$, with $g$ satisfying $\int_{-\infty}^{\infty} g^2(\tau) d\tau < \infty$ and $\delta$ being the Dirac delta distribution. 

The average rate within the frequency band $(-W/2,W/2)$ is calculated in [nats/s] by \cite{Biglieri1998,Ozarow1994a}
\begin{eqnarray}
	C(\hat{H},\rho,\hat{p},W) = \dE \left[ \int_{-W/2}^{W/2} \log \left( 1 + \rho \hat{p}(f) |\hat{H}_f|^2 \right) df \right] \label{eq:standard}, 
\end{eqnarray}
where $\hat{H} = (\hat{H}_f) = \left( \int_{-\infty}^{\infty} e^{-j 2 \pi f \tau} H_\tau d\tau \right), f \in \dR$, is the Fourier transform of the process $(H_\tau)$. 

The function $f \mapsto \hat{p}(f), f \in (-W/2,W/2)$, is the spectral power allocation function with $\hat{p}(f)\geq 0$ and with power constraint $\int_{-W/2}^{W/2} \hat{p}(f) df = 1$. In later calculations we reasonably assume $\hat{p}$ to be continuous. By $\rho = \frac{P}{N_0 W}$ we denote the average SNR. 

The type of integral to calculate $\hat{H}$ and $C(\hat{H},\rho,\hat{p},W)$ is a stochastic Riemann-integral. The assumptions made for the channel model, particularily for the covariance function $R$, ensure the existence of all involved quantities as shown in \cite{Mittelbach2007}. To evaluate (\ref{eq:standard}) we need to calculate these integrals. Fortunately, we are allowed to exchange expectation operator and integration, which is a property of the stochastic Riemann-integral. The result is derived in \cite{Mittelbach2007} and can be rewritten in integral form as 
\begin{eqnarray}
	C(\hat{\sigma},\rho,\hat{p},W) = \int_{-W/2}^{W/2} \dE_z \left[ \log \left( 1 + \rho \hat{p}(f) \hat{\sigma}(f) z \right) \right] df, \label{eq:1tr}
\end{eqnarray}
where $z$ is an exponentially distributed random variable with expected value one\footnote{In the rest of the paper we use the variable $z$ in this meaning.} and 
\begin{eqnarray}
	\hat{\sigma}(f) & = & \dE \left[ |\hat{H}_f|^2 \right] \nonumber \\ 
	& = & 2 \int_{-\infty}^\infty \int_{-\infty}^\infty R(\tau,\tau') \cos(2\pi(\tau-\tau')f) d\tau d\tau' \label{eq:sigmahat}
\end{eqnarray}
for $f \in (-W/2,W/2)$. We explicitly note that $C(\hat{\sigma},\rho,\hat{p},W)$ is finite for all valid parameters. Note that $\hat{\sigma}$ is continuous and differentiable, which follows from the properties of $R$ and from the First Fundamental Theorem of Calculus. Furthermore, $\hat{\sigma}(f) \geq 0$ and $\int_{-W/2}^{W/2} \hat{\sigma}(f) df < \infty$ due to Parseval's Theorem. Further note that $\hat{\sigma}$ is constant in case of uncorrelated scattering. Subsequent analysis requires to rewrite (\ref{eq:1tr}) as
\begin{eqnarray}
	C(\sigma,\rho,p,W) = W \int_0^1 \dE_z \left[ \log \left( 1 + \rho \sigma(f) p(f) z \right) \right] df \label{eq:Crp}
\end{eqnarray}
with the scaled and shifted functions $\sigma(f) = \hat{\sigma}(W(f-1/2))$, $p(f) = \hat{p}(W(f-1/2))$, $f \in (0,1)$, with power constraint $\int_0^1 p(f) df = \frac{1}{W}$. Subsequently, we want to refer to $\hat{\sigma}$ or $\sigma$ as the spectral fading variance (function). 

\section{Preliminaries}
\label{sec:II}

We review necessary basic definitions from \cite{Chan1987} and results from \cite{Ryff1967}. 
Let $x$ be a real measurable function on $(0,1)$. The \underline{distribution function} of $x$ is given by
$$d_x(s) = \mu \left( \{ t: x(t)>s \} \right), s \in \dR,$$ where $\mu$ is the Lebesgue measure. 
The distribution function is nonincreasing and right-continuous. One says that two functions $x,y$ are equivalent in distribution if $d_x = d_y$. 
The right-continuous inverse of $d_x$ is defined by 
$$x^*(t) = \inf \{ s: d_x(s) \leq t \}, \quad t \in (0,1).$$ It is nonincreasing and is called the \underline{decreasing rearrangement} of $x$. 
The functions $x,x^*$ are equally integrable or (non-integrable) and their integrals are related by $\int_0^1 x^*(t) dt = \int_0^1 x(t) dt$ and $\int_0^s x^*(t) dt \geq \int_0^s x(t) dt$. We observe that non-negativity and continuity of $x$ implies non-negativity and continuity of $x^*$. 

Assuming $x \in L_1$ where $L_1 = L_1(0,1)$ is the space of integrable functions on $(0,1)$, \cite{Ryff1970} showed that the original function $x$ can be recovered from $x^*$ with $x(t) = x^*(\phi(t))$ by
\begin{eqnarray}
	\phi(s) = \mu \{ t : x(t)>x(s) \} + \mu \{ t \leq s : x(t)= x(s) \}. \label{eq:getback}
\end{eqnarray}

Next, we define the partial order on functions $x,y \in L_1$. The following definitions can be found in  \cite[Def. 1.2]{Chan1987}. Majorization for functions is also discussed in \cite[Sec. 2.1.3]{Jorswieck2007c}. 
\begin{definition}
	Let $x,y \in L_1$. We say that $x$ \underline{majorizes} $y$ and write $x \succeq y$ if 
	\begin{eqnarray}
		\int_0^s x^*(t) dt \geq \int_0^s y^*(t) dt \label{eq:d1}
	\end{eqnarray}
	for all $s$ with $0 \leq s < 1$ and 
	\begin{eqnarray}
		\int_0^1 x^*(t) dt = \int_0^1 y^*(t) dt \label{eq:d2}. 
	\end{eqnarray}
\end{definition}
\textit{Remark:} One may conclude there always exists an $\epsilon>0$ such that for continuous $x,y$ we have $x^*(t) \geq y^*(t)$ for $t \in (0,\epsilon)$. This is not true in general. However, if we additionally require $$x^*(0+) > y^*(0+),$$ where $x^*(0+) = \lim_{t \downarrow 0}x^*(t)$, then there exists an $\epsilon>0$, such that $x^*(t)>y^*(t)$ for $t \in (0,\epsilon)$. This additional requirement we will use in Theorem \ref{theo:2}. 

The order-preserving functionals with respect to majorization are called Schur-convex and Schur-concave. 
\begin{definition}
	A real-valued functional $\phi$ defined on $\mathcal{A} \subset L_1$ is said to be \underline{Schur-convex} on $\mathcal{A}$ if for all $x,y \in \mathcal{A}$ with $x \succeq y$ it follows that $\phi(x) \geq \phi(y)$. 
	
	If $-\phi$ is Schur-convex on $\mathcal{A}$ then $\phi$ is \underline{Schur-concave} on $\mathcal{A}$. 
\end{definition}
The following result is used later. It can be found in \cite{Ryff1967} and was first proved in \cite{Hardy1929}. The corresponding result on vectors is given in \cite[Prop. 3.C.1]{Marshall1979}.
\begin{lemma}
	Let $x,y \in L_1$ with $x \succeq y$. If $g$ is a real-valued, concave function on an interval $I \subset \dR$ such that ${g \circ x, g \circ y \in L_1}$, then
	\begin{eqnarray}
		\int_0^1 (g \circ x)(t) dt \leq \int_0^1 (g \circ y)(t) dt \label{eq:hardy}.
	\end{eqnarray}
	\label{lem:xy}
\end{lemma}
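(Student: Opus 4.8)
The plan is to reduce the statement to a one-dimensional integration-by-parts argument driven by concavity, mirroring the classical vector proof of Hardy--Littlewood--P\'olya but carried out with decreasing rearrangements in place of sorted vectors. First I would observe that the functional $x \mapsto \int_0^1 (g\circ x)(t)\,dt$ depends only on the distribution function $d_x$: since $x$ and its decreasing rearrangement $x^*$ are equivalent in distribution, the equidistribution identity gives $\int_0^1 (g\circ x)(t)\,dt = \int_0^1 (g\circ x^*)(t)\,dt$, and likewise for $y$. Hence it suffices to prove (\ref{eq:hardy}) with $x,y$ replaced by the nonincreasing functions $u := x^*$ and $v := y^*$. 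Writing $U(s) = \int_0^s u$, $V(s) = \int_0^s v$ and $W := V - U$, the majorization hypothesis (\ref{eq:d1})--(\ref{eq:d2}) becomes $W(s) \leq 0$ for $s \in [0,1]$ together with the endpoint conditions $W(0) = W(1) = 0$; note that $W$ is absolutely continuous, hence continuous, because $u,v \in L_1$.

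Next I would exploit concavity through a supporting-line (supergradient) inequality. Let $g'$ denote, say, the right derivative of $g$, which exists everywhere on the interior of $I$ and is nonincreasing. For every $t$, concavity gives $g(u(t)) \leq g(v(t)) + g'(v(t))\,(u(t)-v(t))$, hence
\begin{eqnarray}
  \int_0^1 \left[(g\circ v)(t) - (g\circ u)(t)\right] dt \geq \int_0^1 g'(v(t))\,(v(t)-u(t))\,dt. \nonumber
\end{eqnarray}
The crucial structural fact is that $t \mapsto g'(v(t))$ is nondecreasing, since $v$ is nonincreasing and $g'$ is nonincreasing, so the composition reverses twice; its Stieltjes measure $d[g'(v(\cdot))]$ is therefore nonnegative.

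Finally I would integrate the right-hand side by parts. Since $v(t)-u(t) = W'(t)$ almost everywhere and $W$ is continuous with $W(0)=W(1)=0$, Riemann--Stieltjes integration by parts yields
\begin{eqnarray}
  \int_0^1 g'(v(t))\,dW(t) = \left[\,g'(v(t))\,W(t)\,\right]_0^1 - \int_0^1 W(t)\,d\left[g'(v(t))\right] = -\int_0^1 W(t)\,d\left[g'(v(t))\right], \nonumber
\end{eqnarray}
the boundary term vanishing because $W(0)=W(1)=0$. As $W \leq 0$ and $d[g'(v(\cdot))] \geq 0$, the remaining integral is $\geq 0$, and chaining the two displays gives $\int_0^1 (g\circ v) \geq \int_0^1 (g\circ u)$, which is exactly (\ref{eq:hardy}).

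The step I expect to be most delicate is the regularity bookkeeping around $g'$: a concave $g$ need not be differentiable everywhere, so I would fix the right (or left) derivative, verify it is finite and nonincreasing on the relevant range of $u,v$, confirm that $t \mapsto g'(v(t))$ has bounded variation so the Stieltjes integral and the integration-by-parts formula are legitimate (using that $W$ is continuous, hence $W$ and the integrator share no jump points), and check integrability of $g'(v)\,(v-u)$ so that every integral above is finite. The reduction via equidistribution also deserves explicit justification to ensure that $g\circ x$ and $g\circ x^*$ lie in $L_1$ simultaneously.
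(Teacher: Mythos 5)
The paper never proves Lemma \ref{lem:xy} itself: it quotes the result from \cite{Ryff1967} and \cite{Hardy1929}, and Ryff's argument is of a different nature from yours (majorization $x \succeq y$ is characterized by $y = Tx$ for a doubly stochastic operator $T$ on $L_1(0,1)$, after which the inequality is Jensen's inequality for such an operator, which preserves integrals). Your route is the direct continuous analogue of the Abel-summation proof of the vector statement, and its skeleton is right: the reduction to $u=x^{*}$, $v=y^{*}$ by equimeasurability, the supergradient inequality, the monotonicity of $t\mapsto g'(v(t))$, and the integration by parts fit together correctly, and the argument is complete whenever $u,v$ are bounded with essential ranges in a compact subinterval of $I$ on which $g'$ is bounded. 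In particular it covers the only use the paper makes of the lemma (Theorem \ref{theo:1}), where $\sigma$ is continuous on a compact interval and $0\leq g'\leq \rho/W$.

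As a proof of the lemma in its stated generality, however, the items you defer as ``regularity bookkeeping'' are the actual substance, and some of them are false as stated rather than merely unverified. For general $x,y\in L_1$ the rearrangements $u,v$ may be unbounded near $0$ and may approach the bad end of $I$ near $1$; then $t\mapsto g'(v(t))$, though monotone on $(0,1)$, need \emph{not} be of bounded variation on $[0,1]$, its Stieltjes measure can have infinite total mass, and the Riemann--Stieltjes integral $\int_0^1 g'(v)\,dW$ you integrate by parts need not exist over the closed interval. Truncating to $[a,b]\subset(0,1)$ gives the valid bound $\int_a^b g'(v)(v-u)\,dt \geq g'(v(b))W(b)-g'(v(a))W(a)$, but both boundary terms are $\infty\cdot 0$ forms whose definite sign is unfavorable: near $b=1$ one has $g'(v(b))\uparrow\sup g'$ (possibly $+\infty$) with $W(b)\leq 0$, and near $a=0$ one has $g'(v(a))\downarrow\inf g'$ (possibly $-\infty$) with $W(a)\leq 0$, so in the degenerate cases both terms are $\leq 0$ and the limit $a\to 0$, $b\to 1$ does not yield nonnegativity by itself. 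Proving that they in fact vanish requires using the hypotheses $g\circ x,\,g\circ y\in L_1$ quantitatively, and this is delicate: concavity only gives one-sided growth control of the form $|g'(s)|\,s\leq |g(s)|+|g(2s)|$, and $g(2v)$ is not assumed integrable. If you want a self-contained proof at full generality, a more robust route avoids the boundary analysis entirely: write the concave $g$ as an affine function minus a nonnegative mixture of kinks, $g(s)=\alpha+\beta s-\int (s-c)^{+}d\mu(c)-\int (c-s)^{+}d\nu(c)$; note that majorization gives, for every $c$,
\begin{eqnarray}
\int_0^1 (x(t)-c)^{+}dt=\sup_{s}\left[\int_0^{s}x^{*}(t)\,dt-cs\right]\geq \sup_{s}\left[\int_0^{s}y^{*}(t)\,dt-cs\right]=\int_0^1 (y(t)-c)^{+}dt, \nonumber
\end{eqnarray}
and similarly for $(c-\cdot)^{+}$ using equality of total integrals; then conclude by Tonelli. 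Alternatively, prove your version for bounded $u,v$ and pass to the limit by truncation, or follow the operator-theoretic proof of \cite{Ryff1967} that the paper cites.
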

In order to compare different correlation scenarios, we can use majorization for functions.
\begin{definition}
We say that a channel with correlation function $R_1$ and corresponding spectral fading variance $\sigma_1$ is \underline{more correlated} than a channel with correlation function $R_2$ and corresponding spectral fading variance $\sigma_2$ if $\sigma_1 \succeq \sigma_2$. 
\end{definition}

\section{Average capacity characterizations}

In this section, we present our main results. Two scenarios are studied, namely the case in which the transmitter has no CSI and the case in which it knows the spectral fading variance function $\sigma$. 

\subsection{No CSI at transmitter}

If the transmitter has no CSI, the most reasonable strategy is to apply equal power allocation, i.e. $p(f) =\frac{1}{W}, f \in (0,1)$. The average rate is then given by
\begin{eqnarray}
	C_{\mathrm{no}}(\sigma,\rho,W) = W  \int_0^1 \dE_z \left[ \log \left( 1 + \frac{\rho}{W} \sigma(f) z \right) \right] df \label{eq:noCSI}.  
\end{eqnarray}

\begin{theorem}
\label{theo:1}
	The functional $C_{\mathrm{no}}$ given in (\ref{eq:noCSI}) with no CSI at the transmitter is Schur-concave with respect to the spectral fading variance $\sigma$ for all $\frac{\rho}{W}\geq 0$, i.e., $\sigma_1 \succeq \sigma_2$ implies $C_{\mathrm{no}}(\sigma_1) \leq C_{\mathrm{no}}(\sigma_2)$. 
\end{theorem}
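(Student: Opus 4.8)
The plan is to apply Lemma~\ref{lem:xy} directly, after recognizing that the capacity functional has exactly the structure of an integral of a concave transformation of $\sigma$. The key observation is that $C_{\mathrm{no}}(\sigma,\rho,W)$ in (\ref{eq:noCSI}) can be written as $W \int_0^1 (g \circ \sigma)(f)\, df$ where $g$ is the scalar function
\begin{eqnarray}
	g(u) = \dE_z \left[ \log \left( 1 + \tfrac{\rho}{W} u z \right) \right], \quad u \geq 0. \label{eq:gdef}
\end{eqnarray}
First I would verify that $g$ is concave on $I = [0,\infty)$. Since $z$ is a fixed nonnegative random variable, it suffices to show that $u \mapsto \log(1 + \tfrac{\rho}{W} u z)$ is concave for each realization $z \geq 0$, which follows immediately from the second derivative being $-(\tfrac{\rho}{W} z)^2 / (1 + \tfrac{\rho}{W} u z)^2 \leq 0$; concavity is then preserved under the expectation over $z$.

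Second, I would confirm the integrability hypotheses of Lemma~\ref{lem:xy}, namely that $g \circ \sigma_1$ and $g \circ \sigma_2 \in L_1(0,1)$. This is guaranteed by the remarks following (\ref{eq:Crp}): the excerpt states that $C(\sigma,\rho,\hat{p},W)$ is finite for all valid parameters, and since $\sigma \geq 0$ with $\int_0^1 \sigma(f)\, df < \infty$, the concavity and the bound $g(u) \leq \log(1 + \tfrac{\rho}{W} u\, \dE[z]) = \log(1 + \tfrac{\rho}{W} u)$ together with $g(u) \geq 0$ ensure $g \circ \sigma$ is integrable.

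Third, with concavity of $g$ and the integrability established, I would invoke Lemma~\ref{lem:xy} with $x = \sigma_1$ and $y = \sigma_2$: the hypothesis $\sigma_1 \succeq \sigma_2$ gives exactly
\begin{eqnarray}
	\int_0^1 (g \circ \sigma_1)(f)\, df \leq \int_0^1 (g \circ \sigma_2)(f)\, df, \nonumber
\end{eqnarray}
and multiplying through by $W > 0$ yields $C_{\mathrm{no}}(\sigma_1) \leq C_{\mathrm{no}}(\sigma_2)$, which is precisely Schur-concavity of $C_{\mathrm{no}}$ in $\sigma$.

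I do not expect a serious obstacle here, since the result is essentially a clean application of the Hardy--Littlewood--P\'olya inequality for functions. The only point requiring mild care is the interchange justifying that the expectation $\dE_z$ defines a genuinely concave scalar function of $u$ and that we may treat the integrand pointwise in $f$; both are routine given that $z$ is a fixed exponential random variable independent of $f$. The parameter $\rho/W \geq 0$ enters only through the nonnegative slope of $g$ and does not affect concavity, so the claim holds uniformly over all admissible SNR as stated.
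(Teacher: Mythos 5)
Your proposal is correct and follows essentially the same route as the paper's own proof: establish concavity of $u \mapsto \dE_z[\log(1+\tfrac{\rho}{W}uz)]$ pointwise in $z$ and preserve it under expectation, then invoke Lemma~\ref{lem:xy} with $x=\sigma_1$, $y=\sigma_2$. Your explicit integrability check via $g(u) \leq \log(1+\tfrac{\rho}{W}u)$ is slightly more detailed than the paper, which simply cites finiteness from prior work, but the argument is the same.
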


\begin{proof}
	We apply Lemma \ref{lem:xy} to prove the theorem. For all $\alpha \geq 0$, the function $s \mapsto \log\left( 1+\alpha s\right)$ is concave on $\dR_+$ and thus $s \mapsto \log \left( 1+\frac{\rho}{W} s z(w) \right)$ is concave on $\dR_+$ for each realization $z(w)$ of $z$. Since the Lebesgue-integral is monotonic (and thus the expectation operator), $$s \mapsto \phi(s) = \dE_z \left[ \log \left( 1 + \frac{\rho}{W} s z \right) \right]$$ is concave on $\dR_+$. Since $C_{\mathrm{no}}(\sigma,\rho,W) = W \int_0^1 (\phi \circ \sigma)(f) df$ is finite under the assumptions made, as noted in Section \ref{sec:II} and proved in \cite{Mittelbach2007}, we can apply the Lemma \ref{lem:1} which completes the proof. 
\end{proof}

\textit{Remark:} Assume two channels with spectral fading variance $\sigma_1$ and $\sigma_2$ and further assume $\sigma_1(f)>0$, $\sigma_2(f)>0$, $f \in (0,1)$. For high average SNR, i.e., high values of $\rho$, the average rate $C_{\mathrm{no}}(\sigma,\rho,W)$ can be approximated by committing the $1$ in the logarithm in (\ref{eq:noCSI}) to obtain 
\begin{eqnarray}
	\tilde{C}_{\mathrm{no}}(\sigma,\rho,W) = W \int_0^1 \dE_z \left[ \log \left( \frac{\rho}{W} \sigma(f) z \right) \right] df.\label{eq:Cno}
\end{eqnarray}
We use this approximation to calculate the difference of the average rates for fading variances $\sigma_1$ and $\sigma_2$ for high SNR and obtain
\begin{eqnarray}
	\Delta \tilde{C}_{\mathrm{no}}(\sigma_1,\sigma_2) & = & \tilde{C}_{\mathrm{no}}(\sigma_1,\rho,W) - \tilde{C}_{\mathrm{no}}(\sigma_2,\rho,W) \nonumber \\ & = & W \int_0^1 \log \left( \frac{ \sigma_1(f)}{\sigma_2(f)} \right) df \label{eq:diff}. 
\end{eqnarray}
The equation is the analogue to equation (19) in \cite{Mittelbach2008}. 

\subsection{Partial CSI at the transmitter}

If the transmitter knows the spectral fading variance, it can adapt the power allocation function $p$ accordingly. The derivation of the optimal power allocation is similar to \cite{Goldsmith1997} and we obtain
\begin{eqnarray}
	& C_{\mathrm{part}}(\sigma,\rho,W) = \max\limits_{p} W \int\limits_0^1 \dE_z \left[ \log \left( 1 + \rho \sigma(f) p(f) z \right) \right] df \nonumber \\
	& \mathrm{s.t} \quad p(f) \geq 0, \quad \int_0^1 p(f) df \leq \frac{1}{W}, \quad f \in (0,1) \label{eq:mopt} . 
\end{eqnarray}
We easily verify that the constraint set is convex. Furthermore, the functional $p \mapsto W \int_0^1 \dE_z \left[ \log \left( 1 + \rho \sigma(f) p(f) z \right) \right] df$ is strictly concave on the constraint set, which is identically shown as in the proof of Theorem \ref{theo:1}. 

Thus the Karush-Kuhn-Tucker (KKT) conditions are sufficient for a unique global maximum \cite{Boyd2003}. Evaluation of the KKT conditions gives the following characterization of the optimal solution $p_{\mathrm{o}}$:
\begin{eqnarray}
	\dE_z \left[ \frac{ \rho \sigma(f) z }{1 + \rho \sigma(f) p_{\mathrm{o}}(f) z} \right] = \nu - \lambda(f) \nonumber \\
	\nu \geq 0, \; \lambda(f) \geq 0, \; p_{\mathrm{o}}(f) \lambda(f) = 0, \; f \in (0,1), \label{eq:optbed}
\end{eqnarray}
where $\nu$ and $f \mapsto \lambda(f), f \in (0,1)$, are Lagrangian multipliers. Without loss of generality we can assume $\nu > 0$. 

The average capacity with $\sigma$ known at the transmitter is then given by
\begin{eqnarray}
C_{\mathrm{part}}(\sigma,\rho,W) = W \int_0^1 \dE_z \left[ \log\left( 1 + \rho \sigma(f) p_{\mathrm{o}}(f) z \right) \right] df. 
\label{eq:Cpart}
\end{eqnarray}

\textit{Remark:} We can equivalently use the decreasing rearrangement $\sigma^*$ of $\sigma$ to calculate the capacity. We simply replace $\sigma$ by $\sigma^*$ and $p_{\mathrm{o}}$ by $p_{\mathrm{o}}^*$ in (\ref{eq:mopt}), (\ref{eq:optbed}), and (\ref{eq:Cpart}). Then, the optimal power allocation $p_{\mathrm{o}}$ belonging to $\sigma$ can be obtained from the optimal power allocation $p_{\mathrm{o}}^*$ belonging to $\sigma^*$ by $p_{\mathrm{o}}(f) = p_{\mathrm{o}}^*(\phi(f)), f \in (0,1)$, with $\phi$ as in (\ref{eq:getback}). 

In the following we consider the quantity 
\begin{eqnarray}
	\theta(\sigma,\rho,W) = \mu \left( \{ f: p_{\mathrm{o}}(f) >0 \} \right) = \mu \left( \{ f: p_{\mathrm{o}}^*(f) > 0 \} \right) \label{eq:volume},
\end{eqnarray}
i.e., the Lebesgue-measure of the support of $p_{\mathrm{o}}$ and call it the \underline{volume of active frequencies}. 

\begin{lemma}
	For given fading variance $\sigma$ and bandwidth $W$, the function $\rho \mapsto \theta(\sigma,\rho,W)$ is strictly monotonic increasing on $[0,\infty)$. In particular, for given $\epsilon>0$ there exists a unique $\tilde{\rho}(\epsilon)$ such that $\theta(\sigma,\rho,W)<\epsilon$ for all $\rho < \tilde{\rho}(\epsilon)$. 
	\label{lem:1}
\end{lemma}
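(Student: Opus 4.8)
The plan is to reduce the statement to a monotonicity property of a single scalar water level. From the KKT characterization (\ref{eq:optbed}) I would first describe the support of $p_{\mathrm{o}}$ explicitly. Introduce the marginal-utility functional $F(a,p)=\dE_z[\frac{az}{1+apz}]$, which is continuous, satisfies $F(a,0)=a$, is strictly decreasing in $p$ for $a>0$, and tends to $0$ as $p\to\infty$. Evaluating (\ref{eq:optbed}) in the two complementary cases $p_{\mathrm{o}}(f)>0$ (so $\lambda(f)=0$ and $F(\rho\sigma(f),p_{\mathrm{o}}(f))=\nu$) and $p_{\mathrm{o}}(f)=0$ (so $\rho\sigma(f)=F(\rho\sigma(f),0)\le\nu$) then shows that $f$ is active if and only if $\rho\sigma(f)>\nu$, i.e. if and only if $\sigma(f)>t$ with the water level $t:=\nu/\rho$. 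Consequently $\theta(\sigma,\rho,W)=\mu(\{f:\sigma(f)>t\})=\mu(\{f:\sigma^*(f)>t\})$, and since $\sigma^*$ is continuous and nonincreasing, $t\mapsto\mu(\{\sigma^*>t\})$ is strictly decreasing on the range of $\sigma^*$. Thus the whole claim reduces to showing that $t(\rho)$ is a strictly decreasing function of $\rho$.

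The key step is a substitution that decouples $\rho$ from the per-frequency optimization. On the active set the stationarity condition reads $F(\rho\sigma(f),p_{\mathrm{o}}(f))=\nu=\rho t$; setting $q:=\rho\,p_{\mathrm{o}}(f)$ and dividing by $\rho$ gives $\dE_z[\frac{\sigma(f)z}{1+\sigma(f)qz}]=t$, i.e. $q=Q(\sigma(f),t)$ where $Q$ solves $F(\sigma,Q)=t$ and depends on $\sigma(f)$ and $t$ alone, not on $\rho$. Inserting $p_{\mathrm{o}}(f)=Q(\sigma(f),t)/\rho$ into the power constraint $\int_0^1 p_{\mathrm{o}}(f)\,df=1/W$ yields the clean relation $\rho=W\,G(t)$ with $G(t):=\int_0^1 Q(\sigma(f),t)\,df$. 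For each fixed $\sigma>0$ the map $t\mapsto Q(\sigma,t)$ is strictly decreasing for $t<\sigma$ and vanishes for $t\ge\sigma$ (because $F$ is strictly decreasing in its second argument), so the integrand is nonincreasing in $t$ for every $f$ and strictly decreasing wherever $f$ is active; hence $G$ is continuous and strictly decreasing. Therefore $\rho=W\,G(t)$ inverts to a strictly decreasing $t(\rho)$, which by the previous paragraph makes $\theta(\sigma,\rho,W)$ strictly increasing in $\rho$.

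For the limiting statement I would examine the endpoints of $G$. As $t\uparrow\sigma^*(0+)=:M$ the active set and each $Q(\sigma(f),t)$ shrink to zero, so $G(t)\to0$ and $\rho\to0$; conversely $G(t)\to\infty$ as $t\downarrow0$, so $G$ is a decreasing bijection and $\rho\to0$ forces $t(\rho)\uparrow M$, giving $\theta(\rho)\to\mu(\{\sigma^*=M\})=0$. Together with $\theta(0)=0$ and strict monotonicity, $\tilde\rho(\epsilon):=\sup\{\rho:\theta(\rho)<\epsilon\}$ is then the unique threshold with $\theta(\rho)<\epsilon$ for all $\rho<\tilde\rho(\epsilon)$. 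The main obstacle is really the substitution $q=\rho\,p_{\mathrm{o}}$ that separates the threshold $t$ from $\rho$; once this decoupling is in place, everything follows from the monotonicity of $G$ and of the level-set measure. The remaining care concerns the degenerate cases: strict monotonicity needs $\sigma^*$ non-constant (for constant $\sigma$, i.e. uncorrelated scattering, one has $\theta\equiv1$), and $\theta\to0$ additionally needs that $M=\sigma^*(0+)$ is not attained on a set of positive measure, both of which hold for the Ornstein-Uhlenbeck example.
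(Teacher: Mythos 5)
Your proposal is correct, and it is in fact more complete than the paper's own proof. Both arguments begin the same way: from the KKT conditions (\ref{eq:optbed}) and the strict monotonicity of the marginal utility $\dE_z\left[\frac{az}{1+apz}\right]$ in $p$ (the paper encodes this in the function $\psi$ of (\ref{eq:defpsi}) and its inverse $\psi^{-1}$), one obtains the water-filling structure: $p_{\mathrm{o}}(f)>0$ if and only if $\sigma^*(f) > \nu/\rho$, so that $\theta$ is the level-set measure of $\sigma^*$ at the water level $t=\nu/\rho$. At exactly this point the paper's proof reduces the claim to the assertion that $\nu(\sigma^*,\rho)/\rho$ decreases with increasing $\rho$ --- and then stops, leaving that step unproven. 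Your substitution $q=\rho\, p_{\mathrm{o}}(f)$ is precisely what closes this gap: it decouples $\rho$ from the per-frequency stationarity condition, gives $q=Q(\sigma(f),t)$ depending on $t$ and $\sigma(f)$ alone, and turns the power constraint into the scalar relation $\rho = W G(t)$ with $G$ continuous and strictly decreasing, so that $t(\rho)$ is strictly decreasing, $\theta(\rho)$ is strictly increasing, and the limit $G(t)\to 0$ as $t\uparrow \sigma^*(0+)$ yields the threshold $\tilde\rho(\epsilon)$. Your closing caveats are also substantively correct rather than cosmetic: for constant $\sigma$ (uncorrelated scattering) one has $\theta\equiv 1$, and if $\sigma^*$ attains its maximum on a set of positive measure then $\theta$ cannot be driven below that measure, so the lemma as literally stated requires these nondegeneracy assumptions; likewise strictness fails for large $\rho$ once $t(\rho)$ drops below $\inf \sigma^*$ and $\theta$ saturates at $1$, which occurs even in the paper's own Ornstein--Uhlenbeck example since $\sigma_d^*$ is bounded away from zero. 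None of this affects the application in Theorem \ref{theo:2}, which only invokes the lemma in the small-$\rho$ regime, but your version of the argument is the one that actually proves what is needed there.
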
	

\begin{proof}
	We define the function 
	\begin{eqnarray}
	  x \mapsto \psi(x) & = & \dE_z \left[ \frac{z}{1+x z} \right] \nonumber \\ & = & \frac{1}{x} - \frac{e^{1/x} \mathrm{Ei}_1(1/x)}{x^2} \label{eq:defpsi}, \; x \in [0,\infty)
	\end{eqnarray}
	with $\mathrm{Ei}_1(y)=-\mathrm{Ei}(-y)$, where $\mathrm{Ei}$ is the exponential integral \cite[Ch. 5.1]{Abramowitz}. The function is strictly convex and strictly monotonic decreasing with $\psi(0)=1$ and $\lim_{x \rightarrow \infty} \psi(x) = 0$. These properties hold for the function $x \mapsto \frac{z}{1+z x}, x \in [0,\infty)$, for all $z \geq 0$ (verified with first and second derivative) and hold also for $\psi$ due to the monotonicity of the Lebesgue-integral. 
	
	The properties of $\psi$ ensure the existence of the inverse function $\psi^{-1}$, which is defined on $(0,1]$. The inverse $\psi^{-1}$ is strictly convex, strictly monotonic decreasing and $\psi^{-1}(x)>0$ for $x \in (0,1)$. We extend $\psi^{-1}$ by defining $\psi^{-1}(x) = 0$ for $x \in (1,\infty)$. 
	
	For $\alpha>0$ consider now the function $x \mapsto \zeta_\alpha(x) = x \psi^{-1}(\alpha x)$ for $x \in (0,\frac{1}{\alpha}]$ and $\zeta_\alpha(x) = 0$ for $x \in (\frac{1}{\alpha},\infty)$. It is not difficult to show that $\zeta_\alpha$ is strictly monotonic decreasing on $(0,\frac{1}{\alpha}]$ with $\zeta_\alpha(0+) = 1$ and $\zeta_\alpha(\frac{1}{\alpha}) = 0$. 
	
	From the optimality condition in (\ref{eq:optbed}) it follows
	\begin{eqnarray}
		p_{\mathrm{o}}(f) = \zeta_{\nu} \left( \frac{1}{\rho \sigma^*(f)} \right) = \frac{1}{\rho \sigma^*(f)} \psi^{-1} \left( \frac{\nu}{\rho \sigma^*(f)} \right) \label{eq:popt1},
	\end{eqnarray}
	for all $f \in (0,1)$ for which $p_{\mathrm{o}}^*(f)>0$. 
	
	Since $\sigma^*$ is monotonically decreasing on $(0,1)$ it follows that $f \mapsto \zeta_\nu \left( \frac{1}{\rho \sigma^*(f)} \right)$ is monotonically decreasing on $(0,1)$ for constant $\nu >0$ and $\rho >0$. 
	
	If $\zeta_\nu \left( \frac{1}{\rho \sigma^*(1)} \right) > 0$ we define $f^*=1$ and otherwise we define $f^*$ to be the smallest $f$ such that $\nu = \rho \sigma^*(f)$. All frequencies $f < f^*$ are active and all other frequencies are not. Thus, we have $\theta(\sigma,\rho,W) = f^*$. From the monotonicity of $\zeta_\nu$ it easily follows, that $f^*$ increases for increasing $\rho$ if $\nu$ and $\sigma^*$ are fixed. However, since $\nu$ is the parameter guaranteeing the condition $\int_0^1 p^*_{\mathrm{o}}(f) df = \frac{1}{W}$ to hold, it is also a function of $\sigma^*$ and $\rho$. Thus, to show that $f^*$ increases with increasing $\rho$, we finally have to show that $\frac{\nu(\sigma^*,\rho)}{\rho}$ decreases with increasing $\rho$. 
\end{proof}

In case of known $\sigma$ at the transmitter, we obtain the following theorem characterizing the impact of correlation for low SNR. 

\begin{theorem}
	Let $\sigma_1$ and $\sigma_2$ be fading variances with $\sigma_1 \succeq \sigma_2$ and $\sigma_1^*(0+) > \sigma_2^*(0+)$. Then for given bandwidth $W$ there exists a $\tilde{\rho}>0$ such that $$C_{\mathrm{part}}(\sigma_1,\rho,W) \geq C_{\mathrm{part}}(\sigma_2,\rho,W)$$ for all $\rho \leq \tilde{\rho}$. 
	\label{theo:2}
\end{theorem}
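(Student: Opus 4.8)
The plan is to combine two facts established earlier: at low SNR the optimal power allocation concentrates on an arbitrarily small set of the strongest frequencies (Lemma~\ref{lem:1}), while the strict peak hypothesis $\sigma_1^*(0+) > \sigma_2^*(0+)$ forces $\sigma_1^* > \sigma_2^*$ on a whole neighbourhood of the origin (the Remark following Definition~1). The strategy is then a feasibility argument: the allocation that is optimal for the weaker channel $\sigma_2$ is supported entirely in a region where $\sigma_1^*$ dominates $\sigma_2^*$, so feeding that same allocation into the better channel $\sigma_1$ can only increase the rate.

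Concretely, I would first invoke the Remark after Definition~1 to obtain an $\epsilon>0$ with $\sigma_1^*(t) > \sigma_2^*(t)$ for all $t \in (0,\epsilon)$; this is exactly where the hypothesis $\sigma_1^*(0+)>\sigma_2^*(0+)$ enters. Next I would apply Lemma~\ref{lem:1} to $\sigma_2$ to obtain a threshold $\tilde{\rho}>0$ such that the volume of active frequencies satisfies $\theta_2 := \theta(\sigma_2,\rho,W) < \epsilon$ for all $\rho \leq \tilde{\rho}$ (taking $\tilde{\rho}$ strictly below the value $\tilde{\rho}(\epsilon)$ of the lemma converts its strict inequality into the closed range required here). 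Because $\sigma_2^*$ is nonincreasing, the support of the rearranged optimal allocation $p_{\mathrm{o},2}^*$ belonging to $\sigma_2^*$ is the interval $(0,\theta_2)$, which therefore lies inside $(0,\epsilon)$.

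The key step is then the capacity comparison via the decreasing-rearrangement formulation of (\ref{eq:Cpart}) noted in the Remark after it. Since $p_{\mathrm{o},2}^*$ is nonnegative and satisfies $\int_0^1 p_{\mathrm{o},2}^*(f)\,df = \frac{1}{W}$, it is a feasible (though generally suboptimal) allocation for the $\sigma_1^*$ problem, so
\begin{eqnarray*}
  C_{\mathrm{part}}(\sigma_1,\rho,W) & \geq & W \int_0^1 \dE_z \left[ \log \left( 1 + \rho \sigma_1^*(f) p_{\mathrm{o},2}^*(f) z \right) \right] df \\
  & = & W \int_0^{\theta_2} \dE_z \left[ \log \left( 1 + \rho \sigma_1^*(f) p_{\mathrm{o},2}^*(f) z \right) \right] df,
\end{eqnarray*}
the last equality holding because $p_{\mathrm{o},2}^*$ vanishes outside $(0,\theta_2)$. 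On $(0,\theta_2)\subset(0,\epsilon)$ we have $\sigma_1^*(f)>\sigma_2^*(f)$ together with $p_{\mathrm{o},2}^*(f)>0$, so by strict monotonicity of $s\mapsto\log(1+s)$ the integrand dominates the corresponding integrand formed with $\sigma_2^*$. Hence the right-hand side is at least $W \int_0^{\theta_2} \dE_z [ \log ( 1 + \rho \sigma_2^*(f) p_{\mathrm{o},2}^*(f) z ) ]\,df$, which is precisely $C_{\mathrm{part}}(\sigma_2,\rho,W)$ as delivered by the rearranged formula for $\sigma_2$. This closes the chain for every $\rho\leq\tilde{\rho}$.

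I expect the main difficulty to be bookkeeping around the decreasing rearrangement rather than any deep inequality: one must ensure that the capacity is genuinely computable from $\sigma^*$ and $p_{\mathrm{o}}^*$ (guaranteed by the Remark after (\ref{eq:Cpart})), that the support of $p_{\mathrm{o},2}^*$ is indeed the low-SNR sub-level set controlled by Lemma~\ref{lem:1}, and that the two neighbourhood sizes — the $\epsilon$ from the peak condition and the $\theta_2$ from the concentration of power — are nested as $\theta_2<\epsilon$. It is worth noting that the argument uses only $\theta_2<\epsilon$ and imposes no requirement on $\theta_1$, so no direct comparison of the two optimal allocations is needed.
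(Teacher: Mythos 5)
Your proof is correct, and it rests on the same ingredients as the paper's: the $\epsilon$-neighbourhood on which $\sigma_1^*>\sigma_2^*$ (from the remark after Definition 1), Lemma \ref{lem:1} to confine the active frequencies at low SNR, the rearranged formulation of (\ref{eq:Cpart}), and pointwise monotonicity of $s\mapsto\log(1+s)$. Where you differ is in how the final comparison of maxima is assembled. The paper applies Lemma \ref{lem:1} to \emph{both} channels, asserts $\tilde{\rho}_1(\epsilon)>\tilde{\rho}_2(\epsilon)$ from the domination $\sigma_1^*>\sigma_2^*$ on $(0,\epsilon)$, rewrites both capacity problems as maximizations of integrals over $(0,\epsilon)$ only, compares integrands for every feasible $p^*$, and then applies $\max_p$ to both sides. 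You instead plug the maximizer $p_{\mathrm{o},2}^*$ of the $\sigma_2^*$ problem into the $\sigma_1^*$ problem as a feasible test allocation, which requires Lemma \ref{lem:1} only for $\sigma_2$ and imposes no condition whatsoever on $\theta_1$ or $\tilde{\rho}_1(\epsilon)$. This buys a small but genuine simplification: the paper's threshold comparison $\tilde{\rho}_1(\epsilon)>\tilde{\rho}_2(\epsilon)$ is stated with a one-line justification, yet the threshold depends on the whole shape of $\sigma_i^*$ through the power constraint (via the Lagrange multiplier $\nu$), not only on its restriction to $(0,\epsilon)$, so that step arguably needs more work than the paper gives it; your route avoids it entirely. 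You are also more careful than the paper in converting the open range $\rho<\tilde{\rho}(\epsilon)$ delivered by Lemma \ref{lem:1} into the closed range $\rho\leq\tilde{\rho}$ demanded by the theorem statement.
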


\begin{proof}
	Since $\sigma_1$ and $\sigma_2$ are continuous and $\sigma_1(0+)> \sigma_2(0+)$, there is an $\epsilon>0$ such that $\sigma_1^*(f)>\sigma_2^*(f)$ for $f \in (0,\epsilon)$. We take the largest possible $\epsilon$ with this property. According to Lemma \ref{lem:1} there exists $\tilde{\rho}_i(\epsilon)$ such that $\theta(\sigma_i,\rho,W)< \epsilon$ for $\rho< \tilde{\rho}_i(\epsilon)$, $i=1,2$. Since $\sigma_1^*(f)> \sigma_2^*(f)$ for $f \in (0,\epsilon)$, we have $\tilde{\rho}_1(\epsilon)>\tilde{\rho}_2(\epsilon)$. Choosing $\rho \in (0,\tilde{\rho}_2(\epsilon))$ we can rewrite (\ref{eq:Cpart}) as
	\begin{eqnarray}
		& C_{\mathrm{part}}(\sigma,\rho,W) = \max\limits_{p} W \int_0^\epsilon \dE_z \left[ \log \left( 1 + \rho \sigma^*(f) p^*(f) z \right) \right] df \nonumber \\
		& \mathrm{s.t.} \; p^*(f)\geq 0, \; \int_0^1 p^*(f)df \leq \frac{1}{W}, \; f \in (0,1). \nonumber
	\end{eqnarray}
	Again using $\sigma_1^*(f) > \sigma_2^*(f)$ on $(0,\epsilon)$ yields
	\begin{eqnarray}
		\dE \left[ \log \left( 1 + \rho \sigma_1^*(f) p^*(f) z\right) \right] > \dE \left[ \log \left( 1 + \rho \sigma_2^*(f) p^*(f) z\right) \right] \label{eq:stern}
	\end{eqnarray}
	for all $f \in (0,\epsilon)$, $\rho \in (0,\tilde{\rho}_2(\epsilon))$, and valid $p^*$. Applying $\int_0^1 (\cdot) df$ and $\max_p (\cdot)$ on both sides of (\ref{eq:stern}) does not change the order of the inequality which completes the proof. 	
\end{proof}
We remark that for small SNR, the effect of correlation with known fading variance is opposite compared to the no CSI scenario. It can be shown using the same approximation as in (\ref{eq:Cno}) that for large SNR, the optimal power allocation is equal power allocation and the behavior with known fading variance at the transmitter is identical to the no CSI scenario.

\section{Example and illustrations}

\subsection{Definition of channel model}

As in \cite{Mittelbach2007}, we consider an exponentially attenuated Ornstein-Uhlenbeck process to illustrate the results. The channel model in time-domain, given by (\ref{eq:channelmodel}), is then described by the covariance function for the real and imaginary part 
\begin{eqnarray}
	R(\tau,\tau') = c e^{-a|\tau-\tau'|} b e^{-b(\tau+\tau')} \mat{1}_{\{\tau \geq 0\}}(\tau) \mat{1}_{\{\tau'\geq 0\}}(\tau'),
\end{eqnarray}
for $\tau, \tau' \in \dR$, where $c>0$ is the normalization constant introduced in Section \ref{sec:cm}. The covariance function captures an exponential power decay controlled by parameter $b$. In addition, the correlation decays exponentially with delay separation controlled by parameter $a$. 

\subsection{Fading variance specification}

The spectral correlation function given by (\ref{eq:sigmahat}) can be computed in closed form \cite{Mittelbach2007}
\begin{eqnarray}
	\hat{\sigma}_d(f) = \frac{2cd}{d^2 + (2 \pi f)^2}, \; f \in (-W/2,W/2) \label{eq:suo1},
\end{eqnarray}
with the parameter $d = a+b$. Note that the power and correlation decay parameters have the same impact on the fading variance, since they occur in (\ref{eq:suo1}) only as sum. We choose the parameter $c$ such that $\int_{-W/2}^{W/2} \hat{\sigma}(f)df = 1$, i.e., $c = \frac{\pi}{2 \arctan\left( \frac{\pi W}{d} \right)}$. Further, we require the shifted and scaled version of $\hat{\sigma}_d$, given by $\sigma_d(f) = \hat{\sigma}_d(W(f-1/2)), f\in (0,1)$. Due to the symmetry of $\sigma_d$ we easily obtain the decreasing rearrangement 
\begin{eqnarray}
	\sigma_d^*(f) = \frac{\pi d}{\arctan \left( \frac{\pi W}{d} \right)} \cdot \frac{1}{d^2 + \left( \pi W f \right)^2}, \; f\in (0,1). \label{eq:sdf}
\end{eqnarray}
For the considered example, we obtain the following result, which allows us to apply Theorem \ref{theo:1} and \ref{theo:2}.
\begin{lemma}
	Let $\sigma_{d_1}$ and $\sigma_{d_2}$ be the fading variances for parameters $d_1,d_2>0$. If $d_1 < d_2$ then $\sigma_{d_1} \succeq \sigma_{d_2}$. Further, we have $\sigma_{d_1}^*(0+) > \sigma_{d_2}^*(0+)$ and $\sigma_{d_1}^*(f) = \sigma_{d_2}^*(f)$ for 
	\begin{eqnarray}
		f = \frac{1}{\pi W} \sqrt{ d_1 d_2 \cdot \frac{ d_2 A_{d_2}-d_1 A_{d_1}}{d_2 A_{d_1} - d_1 A_{d_2}}}, \label{eq:fschnitt}
	\end{eqnarray}
	where $A_x = \arctan \left( \frac{\pi W}{x} \right)$. 
	\label{lem:hirn}
\end{lemma}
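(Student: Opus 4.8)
The plan is to reduce all three assertions to elementary analysis of the explicit formula (\ref{eq:sdf}) for $\sigma_d^*$. The decisive structural fact I would establish is that, for $d_1<d_2$, the two decreasing rearrangements cross exactly once on $(0,1)$, with $\sigma_{d_1}^*$ lying above $\sigma_{d_2}^*$ to the left of the crossing; together with equality of the total integrals this yields $\sigma_{d_1}\succeq\sigma_{d_2}$ directly from (\ref{eq:d1})--(\ref{eq:d2}). I would therefore treat the claims in the order: value at the origin, location and uniqueness of the crossing, equal area, and finally majorization.

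For the behaviour at the origin, evaluating (\ref{eq:sdf}) at $f=0$ gives $\sigma_d^*(0+)=\frac{\pi}{d\,A_d}$ with $A_d=\arctan(\pi W/d)$. Hence $\sigma_{d_1}^*(0+)>\sigma_{d_2}^*(0+)$ is equivalent to strict monotonicity of $h(d)=d\,A_d$. Writing $u=\pi W/d$, a short computation gives $h'(d)=\arctan(u)-\frac{u}{1+u^2}$, and since $\arctan(u)>\frac{u}{1+u^2}$ for all $u>0$ (both vanish at $u=0$ and the arctan term has the larger derivative), $h$ is strictly increasing, so $\sigma_d^*(0+)$ is strictly decreasing in $d$, proving the second assertion. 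For the crossing, setting $\sigma_{d_1}^*(f)=\sigma_{d_2}^*(f)$ and clearing the denominators $d_i^2+(\pi W f)^2$ turns the equation into a \emph{linear} relation in $t=(\pi W f)^2$; solving it and simplifying reproduces (\ref{eq:fschnitt}). Linearity in $t$ shows there is at most one positive solution, and the monotonicity facts just established — $h$ increasing gives $d_2A_{d_2}-d_1A_{d_1}>0$, while $A_x$ decreasing in $x$ together with $d_2>d_1$ gives $d_2A_{d_1}-d_1A_{d_2}>0$ — show the radicand is positive, so a unique crossing exists.

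To close the majorization I would first verify the equal-area condition (\ref{eq:d2}): the substitution $u=\pi W f/d$ gives $\int_0^1\sigma_d^*(f)\,df=\frac{\pi d}{A_d}\cdot\frac{A_d}{\pi W d}=\frac{1}{W}$, independent of $d$, which is just the chosen normalization. Combined with $\sigma_{d_1}^*(0+)>\sigma_{d_2}^*(0+)$ and the single crossing, the difference $\sigma_{d_1}^*-\sigma_{d_2}^*$ is positive on $(0,f_0)$ and negative on $(f_0,1)$, where $f_0$ is the crossing point; consequently $F(s)=\int_0^s(\sigma_{d_1}^*-\sigma_{d_2}^*)(t)\,dt$ increases then decreases with $F(0)=F(1)=0$, hence $F(s)\geq 0$ for all $s\in[0,1)$, which is exactly (\ref{eq:d1}). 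This proves $\sigma_{d_1}\succeq\sigma_{d_2}$.

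The main obstacle is the single-crossing step, and specifically guaranteeing that $f_0$ lies inside $(0,1)$ rather than beyond it. The linearity argument gives uniqueness cheaply, but to place $f_0\in(0,1)$ I would argue by contradiction: if the two rearrangements did not cross in $(0,1)$, then $\sigma_{d_1}^*\geq\sigma_{d_2}^*$ throughout $(0,1)$ by the origin ordering and continuity, which is incompatible with the equal-area identity unless they coincide, contradicting $d_1\neq d_2$. The only genuinely non-routine ingredient is the inequality $\arctan(u)>u/(1+u^2)$ underlying the monotonicity of $h$; everything else is bookkeeping with (\ref{eq:sdf}).
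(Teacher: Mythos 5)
Your proposal is correct, and its treatment of the majorization claim follows a genuinely different route from the paper's. The first two assertions you handle essentially as the paper does: the paper likewise computes $\sigma_d^*(0+) = \frac{\pi}{d\arctan(\pi W/d)}$ and shows strict monotonicity in $d$ via the first derivative (your reduction to $\arctan(u) > \frac{u}{1+u^2}$ is exactly the inequality that computation produces), and it likewise obtains (\ref{eq:fschnitt}) by solving $\sigma_{d_1}^*(f)=\sigma_{d_2}^*(f)$, which is linear in $t=(\pi W f)^2$. The divergence is in the proof of $\sigma_{d_1}\succeq\sigma_{d_2}$: the paper evaluates the partial integrals in closed form, $\xi_s(d)=\int_0^s\sigma_d^*(f)\,df=\frac{\arctan(\pi W s/d)}{W\arctan(\pi W/d)}$, and shows by differentiation in $d$ that $\xi_s$ is decreasing for every fixed $s\in[0,1)$ with $\xi_1(d)=\frac{1}{W}$, which gives (\ref{eq:d1}) and (\ref{eq:d2}) directly. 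You instead avoid any $d$-derivative of the partial integrals and argue qualitatively: equal total area (your substitution confirming $\int_0^1\sigma_d^*=\frac{1}{W}$), strict ordering at $0+$, and the single-crossing property (uniqueness from linearity in $t$, placement in $(0,1)$ by your equal-area contradiction) force $F(s)=\int_0^s(\sigma_{d_1}^*-\sigma_{d_2}^*)$ to rise then fall from $0$ back to $0$, hence $F\geq 0$. Your route buys elegance and reuses the crossing formula (\ref{eq:fschnitt}) as a structural ingredient rather than a separate curiosity, sidestepping the ``tedious derivations'' the paper alludes to; the paper's route buys an explicit quantitative expression for all partial integrals, which makes the monotonicity claim checkable by a single differentiation and does not depend on the sign analysis of the crossing. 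One small point worth making explicit in your write-up: the sign pattern (positive before the crossing, negative after) follows cleanly because the difference of the two rearrangements has, as a function of $t$, a numerator that is affine with strictly negative slope $\pi(d_1A_{d_2}-d_2A_{d_1})<0$, so it changes sign exactly once and in the right direction; your compressed contradiction argument implies this, but stating it removes any ambiguity about tangential (non-transversal) crossings.
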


\begin{proof}
	We give here only the sketch of the proof since the derivations are tedious and due to lack of space. 
	First, we compute $\sigma_d^*(0+) = \frac{\pi}{d \arctan \left( \frac{\pi W}{d} \right)}$ and then show that the function $d \mapsto \sigma_d^*(0+), d \in (0,\infty)$, is strictly monotonic decreasing using its first derivative. This implies $\sigma_{d_1}^*(0+) > \sigma_{d_2}^*(0+)$. Equation (\ref{eq:fschnitt}) is directly obtained by calculating the solution of the equation $\sigma_{d_1}^*(f) = \sigma_{d_2}^*(f)$. 
	
	To prove that $\sigma_{d_1} \succeq \sigma_{d_2}$ we show that for all $s \in [0,1)$ the function $\xi_s$ with 
	\begin{eqnarray}
		\xi_s(d) = \int_0^s \sigma_d^*(f) df = \frac{ \arctan \left( \frac{\pi W s}{d} \right)}{W \arctan \left( \frac{\pi W}{d} \right)}, \; d \in (0,\infty), \label{eq:xis}
	\end{eqnarray}
	is monotonically decreasing using its first derivative and that $\xi_1(d) = \frac{1}{W}$. 
\end{proof}

\subsection{Illustrations}

We use $W=1$ in all subsequent simulations. Fig. \ref{fig:sim1} shows the decreasing rearrangements of three spectral fading variances illustrating the results of Lemma \ref{lem:hirn}. 
\begin{figure}[h!]
	\begin{center}
		\includegraphics[width=.7\linewidth]{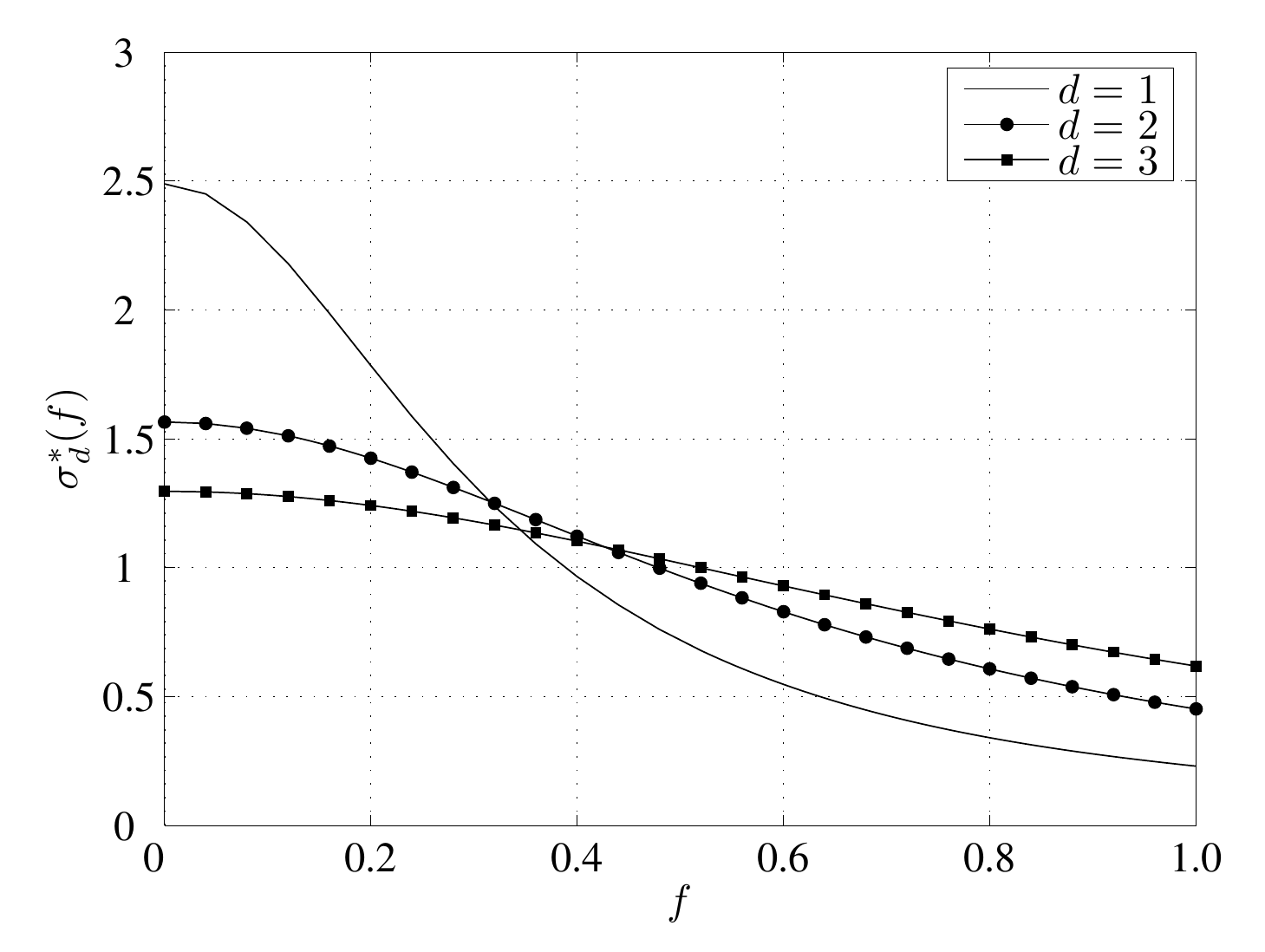}
	\end{center}
	\caption{Decreasing rearrangement of spectral fading variances.}
	\label{fig:sim1}
\end{figure}
We observe that with increasing $d$ the rearranged fading variance is more spread out, which fits well with the notion of majorization. 

\begin{figure}[h!]
	\begin{center}
		\includegraphics[width=.7\linewidth]{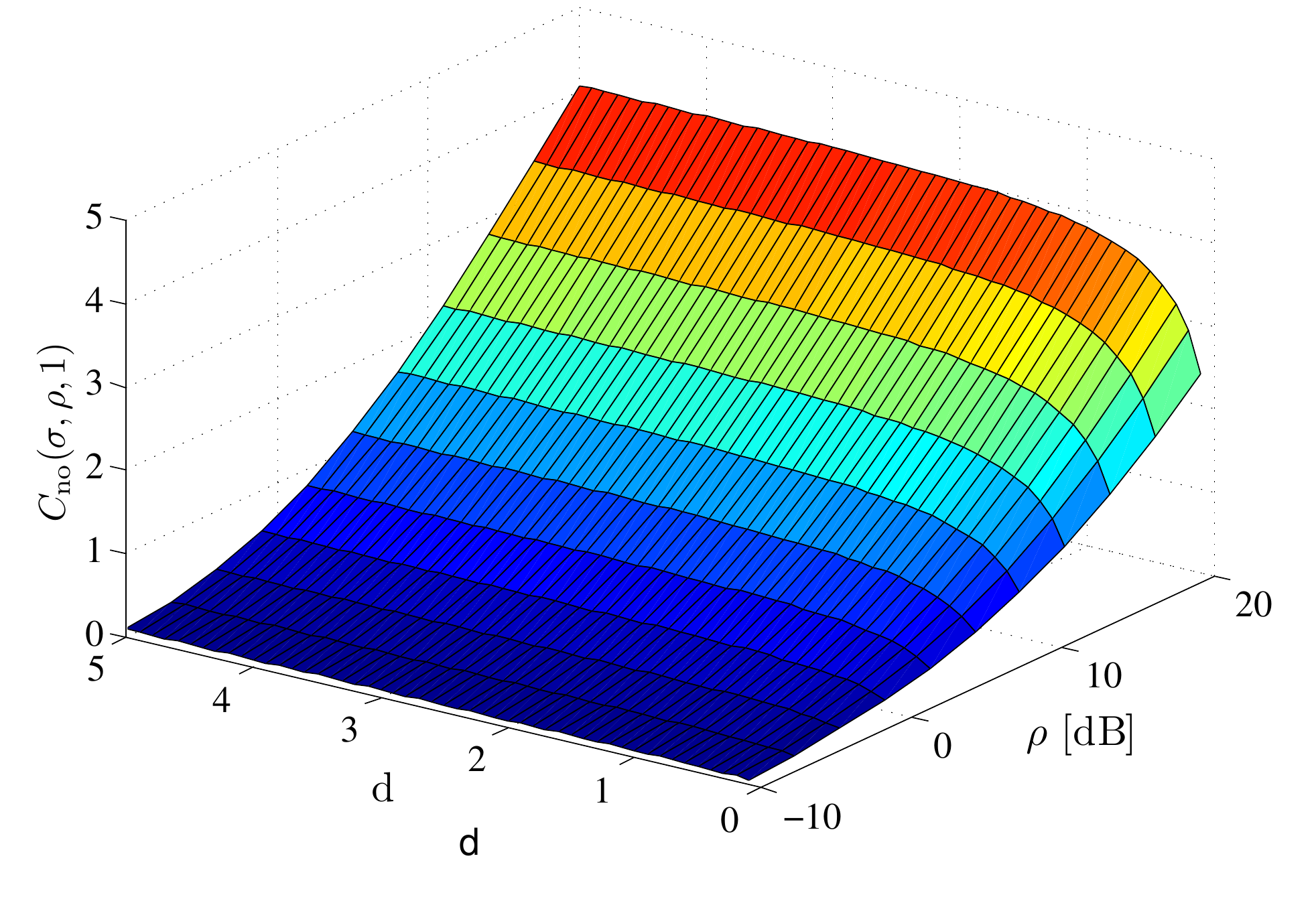}
	\end{center}
	\caption{Average capacity with no CSI at transmitter.}
	\label{sim:2}
\end{figure}

Fig. \ref{sim:2} shows that the average rate $C_{\mathrm{no}}$ with no CSI at the transmitter increases with increasing $d$ for all values of $\rho$, which follows from Lemma \ref{lem:hirn} and Theorem \ref{theo:1}. 

\begin{figure}[h!]
	\begin{center}
		\includegraphics[width=.7\linewidth]{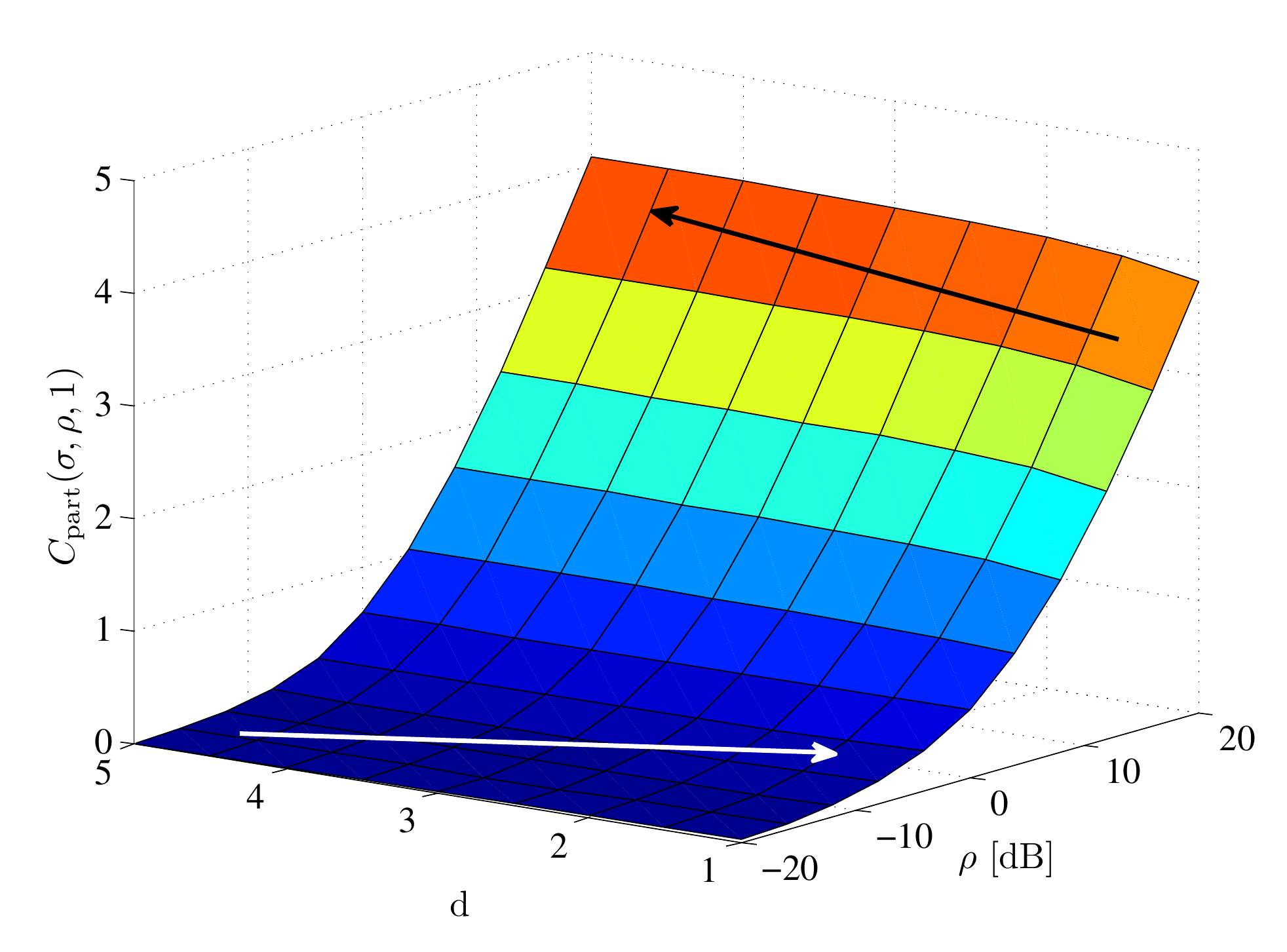}
	\end{center}
	\caption{Average capacity with known fading variance at transmitter.}
	\label{sim:3}
\end{figure}

Finally, Fig. \ref{sim:3} illustrates the result from Theorem \ref{theo:2} and Lemma \ref{lem:1}. The high SNR behaviour is identical to the case with no CSI, for low SNR it is reversed.



\end{document}